\def\BibTeX{{\rm B\kern-.05em{\sc i\kern-.025em b}\kern-.08em
    T\kern-.1667em\lower.7ex\hbox{E}\kern-.125emX}}
\newcommand{\vo}[1]{\boldsymbol{#1}}
\newcommand{\mo}[1]{\boldsymbol{#1}}
\newcommand{\A}{\vo{A}}
\newcommand{\B}{\vo{B}}
\newcommand{\F}{\vo{F}}
\newcommand{\x}{\vo{x}}
\newcommand{\y}{\vo{y}}
\newcommand{\z}{\vo{z}}
\newcommand{\w}{\vo{w}}
\newcommand{\n}{\vo{n}}
\newcommand{\e}{\vo{e}}
\newcommand{\dd}{\vo{d}}
\newcommand{\Lg}{\vo{L}} % L gain
\newcommand{\nx}{N_x}
\newcommand{\ny}{N_y}
\newcommand{\nz}{N_z}
\newcommand{\nd}{N_d}
\newcommand{\Htwo}{\mathcal{H}_2}
\newcommand{\Hinf}{\mathcal{H}_{\infty}}
\newcommand{\norm}[2]{\left\lVert#1\right\rVert_{#2}}
\newcommand{\vt}[1]{\vo{\tilde{#1}}}
\newcommand{\Bd}{\vo{B}_d}
\newcommand{\Cy}{\vo{C}_y}
\newcommand{\Cz}{\vo{C}_z}
\newcommand{\Dd}{\vo{D}_d}
\newcommand{\Dn}{\vo{D}_n}
\newcommand{\Sd}{\vo{S}_d}
\newcommand{\Sn}{\vo{S}_n}
\newcommand{\zerr}{\vo{\varepsilon}}
\newcommand{\Gwz}{\vo{\mathcal{G}}_{\vo{{\tilde{w}}}\zerr}(s)}
\newtheorem{theorem}{Theorem}% [section]
\newtheorem{lemma}[theorem]{Lemma}
\newcommand{\xdot}{\dot{\vo{x}}}
\newcommand{\BDelta}{{{\Delta}}}
\newcommand{\voDelta}{{{\vo{\Delta}}}}
\newcommand{\set}[1]{\mathcal{#1}}
\newcommand{\I}[1]{\vo{I}_{#1}}
\newcommand{\X}{\vo{X}}
\newcommand{\W}{\vo{W}}
\newcommand{\Y}{\vo{Y}}
\newcommand{\R}{\vo{R}}
\newcommand{\M}{\vo{M}}
\newcommand{\N}{\vo{N}}
\newcommand{\Q}{\vo{Q}}
\newcommand{\sym}[1]{\textbf{sym}\left(#1\right)}
\renewcommand{\P}{\mo{P}} % state
\newcommand{\Real}{\mathbb R}
\newcommand{\inner}[1]{\left\langle \vo{e}\phi_i\right\rangle}
\newcommand{\eqnlabel}[1]{\label{eqn:#1}}
\newcommand{\eqn}[1]{(\ref{eqn:#1})}
\newcommand{\fig}[1]{Fig. (\ref{fig:#1})}
\DeclareMathAlphabet{\mathbfsf}{\encodingdefault}{\sfdefault}{bx}{n}
\newcommand{\Z}{\vo{Z}}
\newcommand{\C}{\vo{C}}
\newcommand{\D}{\vo{D}}
\newcommand{\E}{\vo{E}}
\renewcommand{\H}{\vo{H}}
\newcommand{\domain}[1]{\set{D}}
\newcommand{\diag}{\textbf{diag}}
\title{\LARGE \bf Sparse Sensing and Optimal Precision: Robust $\Hinf$ Optimal Observer Design with Model Uncertainty}
\author{Vedang M. Deshpande$^{1}$ and Raktim Bhattacharya$^{2}$% <-this % stops a space
\thanks{This work was supported by the National Science Foundation (grant
number: 1762825).}% <-this % stops a space
\thanks{$^{1}$Vedang M. Deshpande is a Ph.D. student in Aerospace Engineering, Texas A\&M University, College Station, TX 77843, USA. {\tt\small vedang.deshpande@tamu.edu}}%
\thanks{$^{2}$Raktim Bhattacharya is Associate Professor in Aerospace Engineering,
Electrical \& Computer Engineering, Texas A\&M University, College Station, TX 77843, USA. {\tt\small raktim@tamu.edu}}}
\begin{document}
\maketitle
\thispagestyle{empty} % Removes the page number in the first page
\begin{abstract}
We present a framework which incorporates three aspects of the estimation problem, namely, sparse sensor configuration, optimal precision, and robustness in the presence of model uncertainty. The problem is formulated in the $\Hinf$ optimal observer design framework. We consider two types of uncertainties in the system, i.e.
structured affine and  unstructured uncertainties. The objective is to design an observer with a given $\Hinf$ performance index with minimal number of sensors and minimal precision values, while guaranteeing the performance for all admissible uncertainties. The problem is posed as a convex optimization problem subject to linear matrix inequalities. Numerical simulations demonstrate the application of the theoretical results presented in this work.
\end{abstract}

\begin{keywords}
Sparse sensing, optimal sensor precision, robust estimation, $\Hinf$ optimal observer, convex optimization.
\end{keywords}

\section{INTRODUCTION}
The problem of sparse sensor selection typically deals with an estimator's design while using a minimal number of sensors from the available set. This is a well-studied problem as there exist a number of works by various researchers \cite{boyd2009tac, Sundaram2017automatica, pappas2016acc, das2017icssa, skelton2008jour, saraf2017acc, deshpande2021cdcLCSS, Lopez2014acc,  Wolfrum2014tac, Stefanopoulou2020battery, Jovanovic2019TAC,  Chepuri2015tac,  das2020ifac, hiramoto2000vibration}.
This problem has been formulated for both continuous and discrete-time systems in different frameworks such as Kalman filter \cite{Sundaram2017automatica, pappas2016acc, das2017icssa}, $\Htwo/\Hinf$ optimal estimation
 \cite{skelton2008jour, saraf2017acc, Lopez2014acc, Wolfrum2014tac, deshpande2021cdcLCSS, Stefanopoulou2020battery}, and in terms of Cramér–Rao bound without any restrictions on the type of an estimator \cite{Chepuri2015tac}.

It is typical for most sparse sensing formulations to assume that the sensor precisions are known and fixed \cite{boyd2009tac, Sundaram2017automatica, Chepuri2015tac, Jovanovic2019TAC, Lopez2014acc, Wolfrum2014tac, Stefanopoulou2020battery}. However, this assumption often limits the performance of the control and estimation algorithms designed for control systems. On the other hand, at the design time, it might be unclear how precise a sensor should be to achieve the pre-specified performance criterion with a plausible risk of using sensors with unnecessarily high precisions, which increases the economic cost. In \cite{skelton2008jour}, authors presented a framework in which sensor and actuator precisions are treated as variables with economic cost constraints on them. The problem is solved in a convex optimization framework with guaranteed steady-state covariance bounds. They also proposed an ad-hoc algorithm to reduce the number of sensors by iteratively removing those with the least precision. An extension of their work for uncertain plants in an $\Htwo$ formulation is presented in \cite{saraf2017acc}.

Motivated by \cite{skelton2008jour}, in our recent work \cite{deshpande2021cdcLCSS}, we presented an integrated framework for sparse sensor selection, which also minimizes the required sensor precision in the context of $\Htwo/\Hinf$ optimal observer design. This paper generalizes the framework presented in \cite{deshpande2021cdcLCSS} for systems with model uncertainties but limited to an $\Hinf$ formulation.

In related work, the authors of \cite{Stefanopoulou2020battery} considered the problem of sparse sensing for uncertain systems for a specific application of battery temperature estimation. However, they assumed that the sensor precision is known and fixed, and they solved the problem via exhaustive search, which is a combinatorial problem and does not scale well for large-scale systems.

\subsubsection*{Contribution and novelty}
In this paper, we present a theoretical framework that incorporates three aspects of the estimation problem: sparse sensor configuration, optimal precision, and robustness in model uncertainty.
In particular, we discuss the $\Hinf$-optimal observer design for uncertain systems. The objective here is threefold. First, we are interested in identifying a sparse sensor configuration. Second, we want to minimize the required sensor precision to realize the sparse configuration. And finally, the sparse observer should satisfy the specified $\Hinf$ performance criterion for all admissible uncertainties. We consider the following two classes of uncertain systems: (i) systems with structured affine uncertainty in the system matrices, and (ii) systems with unstructured uncertainty which can be expressed in the linear fractional transformation (LFT) \cite{zhouBookRobust} framework. We present results to determine sparse and robust observers for both types of uncertainties.

The organization of this paper is as follows. The sparse robust $\Hinf$ observer design problems are formulated in Section \ref{sec:prob}, and solutions to these problems are presented in Section \ref{sec:thms}. In Section \ref{sec:ex}, we show numerical simulations. Concluding remarks are discussed in Section \ref{sec:concl}.

\section{Problem Formulation} \label{sec:prob}
\subsection{Notation}
The set of real numbers is denoted by $\Real$. Bold uppercase (lowercase) letters denote matrices (column vectors). $\I{}$ and $\vo{0}$ respectively denote an identity matrix and a zero matrix of suitable dimensions.
Define $\sym{\X}:=\X+\X^T$, where $\X^T$ denotes transpose of $\X$.
 Symmetric positive (negative) definite matrices are denoted by the inequality $\X>0$ ($\X<0$). $\diag(\x)$ denotes a diagonal matrix whose diagonal elements are the vector $\x$. Similarly, $\diag\left(\X_1,\X_2,\cdots,\X_N \right)$ denotes a block diagonal matrix. All inequalities and exponents of a vector are to be interpreted elementwise.

\subsection{Systems with structured affine uncertainty}
\subsubsection{Plant}
Consider the following LTI system
\begin{subequations}
\begin{align}
    \xdot &= (\A+\BDelta\A)\x + (\Bd + \BDelta\Bd)\dd, \eqnlabel{sys_proc}\\
    \y &= \Cy\x + \Dd\dd + \Dn\Sn\n, \\
    \z &= \Cz\x,
\end{align} \eqnlabel{sys}
\end{subequations}
where, $\x\in\Real^{\nx}$ is the state vector, $\y\in\Real^{\ny}$ is the vector of measured outputs, and $\z\in\Real^{\nz}$ is the output vector we are interested in estimating.  The process noise $\dd\in\Real^{\nd}$ and the sensor noise $\vo{n} \in \Real^{\ny}$ are $\mathcal{L}_2$-norm bounded signals. The process equation \eqn{sys_proc} is assumed to be independent of the sensor noise. The matrices $\BDelta\A\in\set{A}$ and $\BDelta\Bd\in\set{B}$ denote uncertainty in the system defined as
\begin{subequations}
\begin{align}
\set{A} &:=\{\BDelta\A \ |\ \BDelta\A=\M_1\F_1\N_1, \F_1^T\F_1\leq\I{}\}, \\
\set{B} &:=\{\BDelta\B \ |\ \BDelta\B=\M_2\F_2\N_2, \F_2^T\F_2\leq\I{}\},
\end{align} \eqnlabel{def_unc_set}
\end{subequations}
where $\M_1,\N_1,\M_2,\N_2$ are known deterministic matrices.

The nominal system matrices $\A, \Bd, \Cy, \Cz, \Dd$ are known constant real matrices of appropriate dimensions. The diagonal matrix $0<\Sn\in\Real^{\ny\times\ny}$ is an unknown scaling matrix to be determined. As discussed below in Section \ref{sec:sensor_prec}, $\Sn$ is related to the precision of sensors. We also assume that the individual sensor channels are independent of each other, i.e., $\Dn = \I{}$. All other weightings or scaling matrices are assumed to be known and absorbed in the system matrices.

\subsubsection{Observer and error system}
Now, let us consider the state observer for the system \eqn{sys} given by
\begin{align}
    \dot{\hat{\x}} = \left(\A+\Lg\Cy \right)\hat{\x} - \Lg\y,  \ \ \
    \hat{\z} = \Cz\hat{\x}, \eqnlabel{obs}
\end{align}
where, $\hat{\x}\in\Real^{\nx}$ denotes the estimate of the state vector, $\hat{\z}\in\Real^{\nz}$ denotes the estimate of $\z$, and the $\Lg \in \Real^{\nx \times \ny}$ is the unknown observer gain.
Let us define the state estimation error $\e$, and the observer error $\zerr$ as
\begin{align}
    \e := \x-\hat{\x}, \ \ \ \zerr := \z-\hat{\z}.  \eqnlabel{xzerr} % \\ \text{ and }
    % \zerr(t) := \z-\hat{\z}.
\end{align}
Therefore, from \eqn{sys} and \eqn{obs},
\begin{align}
    \dot{\e} &= \left(\A+\Lg\Cy\right)\e + \BDelta\A\x + (\Bd+\BDelta\B)\dd  \nonumber \\ &\qquad\qquad + \Lg\Dd\dd + \Lg\Sn\n \eqnlabel{def_edot}
\end{align}
% We define the observer error as
% \begin{align}
%      \zerr := \z-\hat{\z}. \eqnlabel{zerr}
% \end{align}
The observation error dynamics follows from \eqn{sys} and \eqn{def_edot} as
\begin{align}
    \dot{\tilde{\x}} = (\vt{A} + \BDelta\vt{A})\vt{x} + (\vt{B}\vt{S} + \BDelta\vt{B}) \vt{w}, \ \ \
    \zerr = \vt{C} \vt{x}, \eqnlabel{obs_err}
\end{align}
where $\vt{x}:=[\x^T,\e^T]^T$ is the augmented state vector, and $\vt{w}:=[\dd^T,\n^T]^T$ is the augmented vector of exogenous noises. And the augmented system matrices are as follows
\begin{equation}\left.
\begin{aligned}
  \vt{A} &:= \begin{bmatrix}\A & \vo{0} \\ \vo{0} & \A+\Lg\Cy \end{bmatrix}, \\
  \BDelta\vt{A} &:= \begin{bmatrix}\BDelta\A & \vo{0} \\ \BDelta\A & \vo{0} \end{bmatrix} = \vt{M}_1\F_1\vt{N}_1, \\
  \vt{B} &:= \begin{bmatrix}\Bd & \vo{0} \\ \Bd+\Lg\Dd & \Lg \end{bmatrix}, \vt{S}:= \begin{bmatrix}\I{}&\vo{0}\\ \vo{0} & \Sn \end{bmatrix}, \\
  \BDelta\vt{B} &:= \begin{bmatrix}\BDelta\Bd & \vo{0} \\ \BDelta\Bd & \vo{0} \end{bmatrix} = \vt{M}_2\F_2\vt{N}_2,  \\
  \vt{M}_i &:=[\M_i^T \ \ \M_i^T]^T, \; \vt{N}_i :=[\N_i \ \ \vo{0}] , i = 1,2.  % \\ \vt{M}_2 &:=[\M_2^T \ \ \M_2^T]^T, \; \vt{N}_2 :=[\N_2 \ \ \vo{0}].
\end{aligned} \right\} \eqnlabel{def_tild_var}
\end{equation}
The objective is to determine the observer gain $\Lg$ such that the error system \eqn{obs_err} is stable, and the effect of $\vt{w}$ on $\zerr$ is bounded by the specified performance index. The sparse robust observer design problem will be stated in Section \ref{sec:sensor_prec}.
Now lets consider the systems with unstructured uncertainty.
\subsection{Systems with unstructured uncertainty}
\subsubsection{Plant}
Consider the uncertain plant
\begin{subequations}
\begin{align}
    \xdot &= \A(\voDelta)\x + \Bd(\voDelta)\dd, \\
    \y &= \Cy(\voDelta)\x + \Dd(\voDelta)\dd + \Dn\Sn\n, \\
    \z &= \Cz\x,
\end{align} \eqnlabel{sys2}
\end{subequations}
where the coefficient matrices are dependent on the uncertain parameters $\voDelta$. As in \eqn{sys}, here also we  assume that the process is independent of the sensor noise, and $\Dn=\I{}$. Systems such as \eqn{sys2} can be expressed  in linear fractional transformation (LFT) framework \cite{zhouBookRobust} as shown in  \fig{lft}.

The signals $\w_{\Delta}$ and $\z_{\Delta}$  are so-called fictitious input and output of the plant due to uncertainty block $\voDelta$. Let $\vo{\Delta}$ be uncertainty block such that $\norm{\vo{\Delta}}{\infty} \leq 1$.
\begin{figure}[htb]
    \centering
    \includegraphics[trim=0.7cm 0.6cm 0.7cm 0.6cm,clip,width=0.32\textwidth]{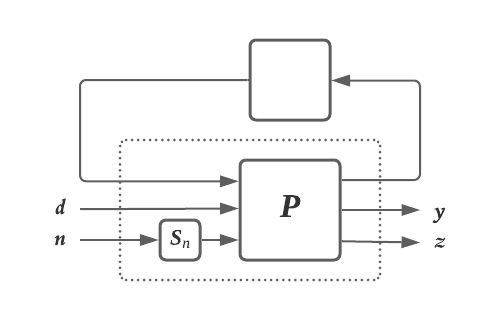}
    \begin{picture}(0,0)
      \put(-50,85){{\small $\z_{\Delta}$}}
      \put(-105,85){{\small $\w_{\Delta}$}}
      \put(-75,75){{\large $\voDelta$}}
        % \put(-15,30){$\z$}
        % \put(-15,49){$\y$}
        % \put(-151,52){$\dd$}
        % \put(-172,32.5){$\Sn$}
        % \put(-202,38){$\n$}
        % \put(-92.5,42){{\Large $\P$}}
    \end{picture}
    \caption{Plant with unstructured uncertainty.}
    \label{fig:lft}
\end{figure}
The system in \fig{lft} can be equivalently written the following state-space form
\begin{subequations}
  \begin{align}
    \xdot &= \A\x + \begin{bmatrix} \B_{\Delta} & \Bd & \vo{0} \end{bmatrix}\vt{w} ,\\
    \z_{\Delta} &= \C_{\Delta}\x +  \begin{bmatrix} \E_{\Delta} & \E_d & \vo{0} \end{bmatrix}\vt{w}, \\
    \y &= \Cy\x + \begin{bmatrix} \D_{\Delta} & \Dd & \Sn \end{bmatrix}\vt{w},\\
    \z &= \Cz\x,  \\
    \w_{\Delta} &= \vo{\Delta}\z_{\Delta},
  \end{align}  \eqnlabel{sys2_cut}
\end{subequations}
where $\vt{w}:= \begin{bmatrix} \w_{\Delta}^T &\dd^T & \n^T \end{bmatrix}^T$, and  $\A,\Bd$ etc. denote nominal values of the uncertain matrices $\A(\voDelta), \Bd(\voDelta)$. Top four equations in \eqn{sys2_cut} denote the state-space equations for the open loop system shown in the dotted box in \fig{lft}.

\subsubsection{Observer and error system}
We consider a state observer of the same form as \eqn{obs} for the system \eqn{sys2_cut}, with state estimation error and observer error as defined in \eqn{xzerr}. Therefore, error dynamics equations follow as
\begin{align}
    \dot{\e} = \left(\A+\Lg\Cy\right)\e + \vt{B}\vt{S} \vt{w}, \ \ \
    \zerr = \Cz\e, \eqnlabel{obs_err2}
\end{align}
\begin{subequations}
\begin{align}
  \text{where, } \vt{B} &:=  \begin{bmatrix} \B_{\Delta} & \Bd & \vo{0} \end{bmatrix}  + \Lg\begin{bmatrix} \D_{\Delta} & \Dd & \I{}\end{bmatrix}, \\
  \vt{S}&:= \diag\left( \I{}, \I{}, \Sn \right). % \begin{bmatrix}\I{}&\vo{0}& \vo{0} \\ \vo{0}  & \I{} & \vo{0} \\ \vo{0}& \vo{0} &\Sn \end{bmatrix}.
\end{align}  \eqnlabel{def_tild_var2}
\end{subequations}
The objective here is to determine the observer gain $\Lg$ such that the transfer function from $\vt{w}$ to $[\z_{\Delta}^T, \zerr^T]^T$ is stable and bounded, and  the effect of $\vt{w}$ on $\zerr$ should be minimal.
 A formal problem statement will be presented in the following section.

\subsection{Sensor precision and observer design} \label{sec:sensor_prec}
As mentioned earlier, $\dd$ and $\n$ are $\mathcal{L}_2$-norm bounded but arbitrary signals.
Let us denote the sensor noise entering the system by $\vt{n}:=\Sn\n$, and let $\vt{n}_i$ denote the $i^{\text{th}}$ component of the signal $\vt{n}$. We define sensor precision to be the reciprocal of square of $\mathcal{L}_2$-norm (or \textit{energy}) of a signal, i.e. precision of the $i^{\text{th}}$ sensor channel is $1/\norm{\vt{n}_i}{2}^2$.

Since $\Sn>0$ is a diagonal matrix, let us define $0<\vo{\beta} =\left[\beta_1,\cdots, \beta_{\ny}\right]^T \in \Real^{\ny}$ such that
\begin{align}
   \diag{}(\vo{\beta}) := (\Sn\Sn)^{-1}. \eqnlabel{def_beta}
\end{align}
Therefore, precision of the $i^{\text{th}}$ sensor then becomes $1/\norm{\vt{n}_i}{2}^2 = \beta_i/\norm{\n_i}{2}^2$. Further, without loss of generality, we assume that $\norm{\n_i}{2}^2 = 1$. Therefore, precision of the $i^{\text{th}}$ sensor is simply $\beta_i$, and $\vo{\beta}$ is the precision vector.

Since $\vo{\beta}$ is interpreted as the precision vector, a sparse sensor configuration can be characterized by a sparse vector $\vo{\beta}$. If $\beta_i = 0$ for some sensor, it implies that the $i^{\text{th}}$ sensor noise channel contains infinite energy, or equivalently, that sensor  is not used.

Minimizing the number of non-zero elements in $\vo{\beta}$, i.e. $\norm{\vo{\beta}}{0}$, would yield a sparse configuration. However, minimization of $l_0$-norm is a non-convex problem, and in general very difficult to solve, especially for large-scale systems. A natural relaxation for the sparse configuration problem is minimization of $l_1$-norm instead. The minimization of $\norm{\vo{\beta}}{1}$ promotes sparsity, and as discussed in Section \ref{sec:iterate}, iterative reweighting techniques can be used to arrive at a sparse configuration. Such iterative techniques minimize weighted $l_1$-norm defined as
$$ \norm{\vo{\beta}}{1, \vo{\rho}}:= \vo{\rho}^T|\vo{\beta}|,$$
where $0<\vo{\rho}\in\Real^{\ny}$ is a specified weighting vector. Next, we formally define the sparse sensing problem for uncertain systems as follows.

\subsubsection{Observer design problem for system \eqn{sys}}

For the observer error system \eqn{obs_err}, let $\Gwz$ be the transfer function matrix from $\vt{w}$ to $\zerr$. We wish to minimize the effect of $\vt{w}$ on the observation error $\zerr$, which can be achieved by ensuring $\norm{\Gwz}{\infty} < \gamma$, i.e. $\norm{\zerr}{2}\leq\gamma\norm{\vt{w}}{2}$, for some $\gamma>0$. Therefore, the robust sparse sensing problem is formally stated as:
\begin{equation} % \left.
  \begin{aligned}
    &\textit{Given uncertain system \eqn{sys} and error system \eqn{obs_err},} \\
    & \textit{given $\gamma>0$ and $\vo{\rho}>0$, determine optimal $\Lg,\vo{\beta}$} \\
    & \textit{that minimize } \norm{\vo{\beta}}{1,\vo{\rho}} \textit{ such that } \norm{\Gwz}{\infty} \leq \gamma \\
    & \textit{for all admissible } \BDelta\A \textit{ and } \BDelta\Bd.
     \eqnlabel{hinf_prob}
  \end{aligned} % \right\}
\end{equation}
\subsubsection{Observer design problem for the system in \eqn{sys2_cut}}
Consider the system in \eqn{sys2_cut} and the error system in \eqn{obs_err2}.
As mentioned earlier, the transfer function from $\vt{w}$ to $[\z_{\Delta}^T, \zerr^T]^T$ should be stable and bounded.
This is guaranteed if we ensure that the $\Hinf$ norm of the transfer function from $\vt{w}$ to $\zerr$ is less than a specified performance $\gamma>0$, i.e. $\norm{\Gwz}{\infty}\leq\gamma$, and for the transfer function from $\vt{w}$ to $\z_{\Delta}$ we require  $\norm{\vo{\mathcal{G}}_{\vt{w}\z_{\Delta}}}{\infty}\leq 1$ for the overall stability.
Therefore, the robust sparse sensing problem is then:
\begin{equation} % \left.
  \begin{aligned}
    &\textit{Given the uncertain system in \eqn{sys2_cut}, and the error} \\
    & \textit{system in \eqn{obs_err2}, given $\gamma>0$ and $\vo{\rho}>0$, determine} \\
    & \textit{optimal $\Lg,\vo{\beta}$ that minimize } \norm{\vo{\beta}}{1,\vo{\rho}} \textit{ such that} \\
    & \norm{\Gwz}{\infty} \leq \gamma \textit{ and } \norm{\vo{\mathcal{G}}_{\vt{w}\z_{\Delta}}}{\infty}\leq 1 \\
    & \textit{ for all admissible } \voDelta.
     \eqnlabel{hinf_prob2}
  \end{aligned} % \right\}
\end{equation}

In the following section, we present the solutions to the problems defined in \eqn{hinf_prob} and \eqn{hinf_prob2}.

\section{Robust Sparse $\Hinf$ Observers} \label{sec:thms}
Before proceeding to the main results, we present the following lemmas, which will be useful in completing the proofs.
\subsection{Preliminaries}
\begin{lemma}[Schur complement \cite{lmiCSys}] \label{lem:schur}
Let $\X$ be a well-partitioned matrix defined as
 \begin{align*}
   \X:=\begin{bmatrix} \P & \Q \\ \Q^T & \R \end{bmatrix},
 \end{align*}
then, $\X<0$, if and only if, $\R<0$ and $\P - \Q\R^{-1}\Q^T < 0$.
%  the following statements are equivalent.
% \begin{itemize}
%   \item[(a)] $\X<0$.
%   \item[(b)] $\R<0$ and $\P - \Q\R^{-1}\Q^T < 0$.
% \end{itemize}
\end{lemma}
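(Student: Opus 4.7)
The plan is to establish the equivalence by a congruence transformation that block-diagonalizes $\X$, together with the fact that congruence with a nonsingular matrix preserves the sign of a symmetric form (Sylvester's law of inertia). A preliminary observation is that the statement implicitly assumes $\R$ is invertible, since $\R^{-1}$ appears; but this is in fact forced in the ``only if'' direction by $\X<0$, because testing the quadratic form against vectors $[\vo{0}^T,\y^T]^T$ with $\y\neq\vo{0}$ immediately yields $\y^T\R\y<0$, so $\R<0$ and $\R^{-1}$ exists. In the ``if'' direction, $\R<0$ is a hypothesis, so invertibility is again automatic.

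With $\R^{-1}$ in hand, I would introduce the nonsingular block-triangular matrix
\[
\vo{T}:=\begin{bmatrix}\I{} & -\Q\R^{-1} \\ \vo{0} & \I{}\end{bmatrix},
\]
and compute the congruence $\vo{T}\X\vo{T}^T$. By direct block multiplication the off-diagonal blocks cancel, producing the block-diagonal matrix $\Diag{\P-\Q\R^{-1}\Q^T,\ \R}$. Since $\vo{T}$ is nonsingular, $\X<0$ holds if and only if $\vo{T}\X\vo{T}^T<0$, and a symmetric block-diagonal matrix is negative definite exactly when each diagonal block is.

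Putting the two ingredients together closes both directions at once: from $\X<0$ one extracts $\R<0$ (as above) together with $\P-\Q\R^{-1}\Q^T<0$ from the first diagonal block of $\vo{T}\X\vo{T}^T$; conversely, given $\R<0$ and $\P-\Q\R^{-1}\Q^T<0$, the block-diagonal congruent matrix is negative definite, hence so is $\X$. There is no real obstacle here; this is a classical result whose only delicate point is the explicit handling of the invertibility of $\R$ that is implicit in the statement, together with a careful verification of the congruence computation.
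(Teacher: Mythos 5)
Your proof is correct: the congruence by $\begin{bmatrix}\I{} & -\Q\R^{-1}\\ \vo{0} & \I{}\end{bmatrix}$ does block-diagonalize $\X$ into $\diag\left(\P-\Q\R^{-1}\Q^T,\ \R\right)$, and your preliminary extraction of $\R<0$ from $\X<0$ (by testing against vectors of the form $[\vo{0}^T,\y^T]^T$) properly handles the invertibility of $\R$ that the statement leaves implicit. The paper itself offers no proof of this lemma, simply citing a standard reference, and your argument is exactly the classical one that reference would supply, so there is nothing to contrast.
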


\begin{lemma}[Variable elimination \cite{lmiCSys}] \label{lem:var_el}
Let $\Q,\X,\Y,\F$ be real matrices of appropriate dimensions. Then the following statements are equivalent.
\begin{itemize}
  \item[(a)] $\Q+\X\F\Y + \Y^T\F^T\X^T <0$ for all $\F$ that satisfy $\F^T\F\leq\I{}$.
  \item[(b)] There exists a scalar $\delta>0$ such that $\Q+\delta\X\X^T + \delta^{-1}\Y^T\Y <0.$
\end{itemize}
\end{lemma}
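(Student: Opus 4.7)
The plan is to prove the two implications separately. The direction (b) $\Rightarrow$ (a) is a short application of a matrix Young-type inequality, while (a) $\Rightarrow$ (b) is the substantive content and requires an optimization argument in the scalar $\delta$.

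For (b) $\Rightarrow$ (a), I would start from the nonnegativity
\[
  \left(\sqrt{\delta}\,\X - \delta^{-1/2}\Y^T\F^T\right)\!\left(\sqrt{\delta}\,\X - \delta^{-1/2}\Y^T\F^T\right)^T \geq 0,
\]
which expands to $\delta\,\X\X^T + \delta^{-1}\Y^T\F^T\F\Y - \X\F\Y - \Y^T\F^T\X^T \geq 0$. Applying $\F^T\F \leq \I{}$ to the middle term yields
\[
  \X\F\Y + \Y^T\F^T\X^T \leq \delta\,\X\X^T + \delta^{-1}\Y^T\Y,
\]
and adding $\Q$ to both sides then invoking (b) delivers (a) for every admissible $\F$.

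For (a) $\Rightarrow$ (b), the plan is to reduce the matrix statement to a scalar one and then optimize in $\delta$. First, setting $\F = \vo{0}$ in (a) forces $\Q < 0$. Next, for any fixed vector $z$ one has $\sup_{\F^T\F \leq \I{}} 2\,z^T\X\F\Y z = 2\|\X^T z\|\,\|\Y z\|$ (attained at a rank-one $\F$), so (a) is equivalent to the scalar condition
\[
  z^T\Q z + 2\|\X^T z\|\,\|\Y z\| < 0 \quad \text{for all } z \neq \vo{0}.
\]
Define $h(\delta) := \lambda_{\max}\!\bigl(\Q + \delta\,\X\X^T + \delta^{-1}\Y^T\Y\bigr)$; (b) reduces to showing $h(\bar\delta) < 0$ for some $\bar\delta > 0$. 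The degenerate cases $\X = \vo{0}$ or $\Y = \vo{0}$ follow directly from $\Q < 0$ by choosing $\delta$ small or large, respectively. Otherwise $h(\delta) \to +\infty$ at both endpoints, so its infimum is attained at an interior $\bar\delta$. Choosing a unit-norm leading eigenvector $\bar v$ of $\Q + \bar\delta\,\X\X^T + \bar\delta^{-1}\Y^T\Y$, the first-order stationarity $\bar v^T(\X\X^T - \bar\delta^{-2}\Y^T\Y)\bar v = 0$ yields $\bar\delta = \|\Y\bar v\|/\|\X^T\bar v\|$, and substituting back collapses
\[
  h(\bar\delta) = \bar v^T\Q\bar v + 2\|\X^T\bar v\|\,\|\Y\bar v\| < 0,
\]
where the strict inequality is the scalarized form of (a) applied to the unit vector $\bar v$.

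The main obstacle is precisely this last uniformity step: producing a single $\delta$ that works for all $z$, given that the pointwise optimal $\delta(z) = \|\Y z\|/\|\X^T z\|$ varies with $z$. The argument sidesteps this by taking $\delta$ to be the outer minimizer of $h$ rather than matching it to any particular $z$; one must then verify carefully that the first-order condition at $\bar\delta$ supplies an eigenvector $\bar v$ whose associated $\delta(\bar v)$ coincides with $\bar\delta$, including the subcases in which the leading eigenvalue is non-simple or $\X^T\bar v$ or $\Y\bar v$ degenerates to zero. An alternative cleaner route is Petersen's original application of the lossless S-procedure for a single quadratic inequality, which produces the required multiplier $\delta$ directly without the calculus.
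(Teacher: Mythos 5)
The paper does not prove this lemma: it is quoted from the cited reference \cite{lmiCSys} (it is the classical ``Petersen's lemma'' / matrix-variable elimination result), so there is no in-paper argument to compare yours against. Judged on its own, your proposal is essentially the standard proof and is sound. The (b) $\Rightarrow$ (a) direction is complete as written: the square-completion plus $\Y^T\F^T\F\Y \leq \Y^T\Y$ is exactly the right one-line argument. For (a) $\Rightarrow$ (b), the scalarization $\sup_{\F^T\F\leq\I{}} 2z^T\X\F\Y z = 2\norm{\X^Tz}{2}\norm{\Y z}{2}$ is correct (the supremum is attained at a rank-one $\F$ of unit spectral norm, so the pointwise reformulation of (a) is a genuine equivalence), the degenerate cases $\X=\vo{0}$ or $\Y=\vo{0}$ are handled correctly via $\Q<0$, and minimizing $h(\delta)=\lambda_{\max}(\Q+\delta\X\X^T+\delta^{-1}\Y^T\Y)$ over $\delta$ is a legitimate route to the single uniform multiplier. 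The one point you flag --- extracting a leading eigenvector $\bar v$ with $\bar\delta = \norm{\Y\bar v}{2}/\norm{\X^T\bar v}{2}$ when the top eigenvalue is not simple --- is a real gap in the write-up but is fillable: by Danskin's theorem the one-sided derivatives of $h$ at the interior minimizer have opposite signs over the set of unit leading eigenvectors, and since that set is connected (or $f_v'=f_{-v}'$ in the one-dimensional case) the intermediate value theorem supplies a $\bar v$ with $\norm{\X^T\bar v}{2}^2=\bar\delta^{-2}\norm{\Y\bar v}{2}^2$; the subcase $\X^T\bar v=\vo{0}$ then forces $\Y\bar v=\vo{0}$ and reduces to $\bar v^T\Q\bar v<0$. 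Your remark that the lossless S-procedure gives $\delta$ directly is also accurate and is arguably the cleaner textbook route; either way, nothing in your outline would fail.
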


\begin{lemma}[Wang et al. \cite{wang1992SCL}] \label{lem:var_el2}
Let $\P,\Q,\X,\Y,\F$ be real matrices of appropriate dimensions such that $\P>0$ and $\F^T\F\leq\I{}$. If there exists a scalar $\delta>0$ such that $\P^{-1} - \delta^{-1}\X\X^T>0$, then the following is true.
\begin{align*}
  (\Q+\X\F\Y)^T & \P(\Q+\X\F\Y) \\ & \leq \Q^T(\P^{-1} - \delta^{-1}\X\X^T)^{-1}\Q + \delta\Y^T\Y.
\end{align*}
\end{lemma}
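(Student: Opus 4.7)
The plan is to reduce the desired inequality to a block positive-semidefinite condition via the Schur complement (Lemma~\ref{lem:schur}), and then to split the resulting $2\times 2$ block matrix as a sum of two positive semidefinite blocks: one that captures the uncertain factor $\F$, and one that is handled by the standing hypothesis $\P^{-1}-\delta^{-1}\X\X^T>0$.

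First, abbreviate $R := \Q^T(\P^{-1}-\delta^{-1}\X\X^T)^{-1}\Q + \delta\Y^T\Y$. Since $\P^{-1}>0$, Lemma~\ref{lem:schur} lets me replace the target inequality $R-(\Q+\X\F\Y)^T\P(\Q+\X\F\Y) \geq 0$ by the equivalent block condition
\begin{align*}
\begin{bmatrix} R & (\Q+\X\F\Y)^T \\ \Q+\X\F\Y & \P^{-1} \end{bmatrix} \geq 0.
\end{align*}
Writing $G := (\P^{-1}-\delta^{-1}\X\X^T)^{-1}$, I would then use the additive decomposition
\begin{align*}
&\begin{bmatrix} R & (\Q+\X\F\Y)^T \\ \Q+\X\F\Y & \P^{-1} \end{bmatrix} \\
&\quad = \begin{bmatrix} \delta\Y^T\Y & \Y^T\F^T\X^T \\ \X\F\Y & \delta^{-1}\X\X^T \end{bmatrix} + \begin{bmatrix} \Q^T G \Q & \Q^T \\ \Q & \P^{-1}-\delta^{-1}\X\X^T \end{bmatrix},
\end{align*}
which is verified block by block.

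Each summand is then shown to be PSD separately. For the first, I would invoke the outer-product identity
\begin{align*}
&\begin{bmatrix} \sqrt{\delta}\Y^T \\ \delta^{-1/2}\X\F \end{bmatrix}\begin{bmatrix} \sqrt{\delta}\Y & \delta^{-1/2}\F^T\X^T \end{bmatrix} \\
&\quad = \begin{bmatrix} \delta\Y^T\Y & \Y^T\F^T\X^T \\ \X\F\Y & \delta^{-1}\X\F\F^T\X^T \end{bmatrix} \geq 0,
\end{align*}
together with the bound $\X\F\F^T\X^T \leq \X\X^T$, which follows from $\F^T\F \leq \I{}$ (and hence $\F\F^T \leq \I{}$ via the singular value decomposition), applied only to the $(2,2)$ block to lift PSDness of the displayed matrix to PSDness of the first summand. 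For the second summand, Lemma~\ref{lem:schur} combined with the hypothesis $\P^{-1}-\delta^{-1}\X\X^T>0$ reduces PSDness to the identity $\Q^T G \Q - \Q^T(\P^{-1}-\delta^{-1}\X\X^T)^{-1}\Q = 0$, which is true by definition of $G$.

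The only step that requires insight is discovering the additive split; once it is in hand, the PSD checks for each block are routine applications of Lemma~\ref{lem:schur} and the elementary bound $\F\F^T \leq \I{}$. In particular, neither Lemma~\ref{lem:var_el} nor the statement of Lemma~\ref{lem:var_el2} itself is used.
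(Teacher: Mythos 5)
Your proof is correct. Note that the paper itself supplies no proof of Lemma~\ref{lem:var_el2}: it is stated with a citation to Wang et al., so the only comparison available is with the classical argument in that reference, which typically expands $(\Q+\X\F\Y)^T\P(\Q+\X\F\Y)$, bounds the cross terms via $\sym{A^TB}\leq \delta^{-1}A^TA+\delta B^TB$, and then invokes the matrix inversion lemma to recognize $\Q^T\left(\P+\P\X(\delta\I{}-\X^T\P\X)^{-1}\X^T\P\right)\Q$ as $\Q^T(\P^{-1}-\delta^{-1}\X\X^T)^{-1}\Q$. Your route---Schur complement to a $2\times 2$ block condition, an additive split into an outer-product block handled by $\F\F^T\leq\I{}$ and a block whose Schur complement vanishes identically---avoids the matrix inversion lemma entirely and makes the role of the hypothesis $\P^{-1}-\delta^{-1}\X\X^T>0$ transparent (it is exactly what makes the second summand well defined and PSD). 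All block identities in your decomposition check out, and $\F^T\F\leq\I{}\Rightarrow\F\F^T\leq\I{}$ is legitimate. The one imprecision worth fixing: the paper's Lemma~\ref{lem:schur} is stated only for strict definiteness, whereas you need the non-strict variant (the block matrix is PSD iff the Schur complement with respect to a \emph{strictly} positive definite $(2,2)$ block is PSD); that variant is standard and does hold here since $\P^{-1}>0$ and $\P^{-1}-\delta^{-1}\X\X^T>0$, but you should state it rather than cite the strict lemma verbatim.
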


\begin{lemma}[Bounded real \cite{lmiCSys}] \label{lem:bounded}
Consider the LTI system,
\begin{align*}
    \xdot = \A\x + \B\w, \,\,
     \y = \C\x + \D\w,
\end{align*}
and its transfer function matrix $\vo{\mathcal{G}}(s) := \C(s\I{}-\A)^{-1}\B+\D$.
Then for a given scalar $\gamma>0$, $\norm{\vo{\mathcal{G}}(s)}{\infty}\leq\gamma$ if and only if there exists a matrix $\X>0$ such that
\begin{align*}
  \begin{bmatrix}
    \X\A + \A^T\X + \C^T\C &  \X\B +\C^T\D \\
    \ast & \D^T\D - \gamma^2\I{}
  \end{bmatrix} < 0.
\end{align*}
\end{lemma}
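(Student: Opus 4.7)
The strategy I would follow is the classical dissipation-inequality / storage-function proof of the bounded real lemma. The pivotal reformulation is that, after applying the Schur complement from Lemma~\ref{lem:schur} to the block $\D^T\D - \gamma^2\I{}$ (which must itself be negative definite for the overall LMI to hold, forcing $\gamma > \norm{\D}{2}$), the stated matrix inequality is algebraically equivalent to the quadratic-form inequality
$$
\begin{bmatrix} \x \\ \w \end{bmatrix}^T
\begin{bmatrix}
\X\A + \A^T\X + \C^T\C & \X\B + \C^T\D \\
\ast & \D^T\D - \gamma^2\I{}
\end{bmatrix}
\begin{bmatrix} \x \\ \w \end{bmatrix} < 0
$$
holding for every nonzero $(\x,\w)$. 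Setting the candidate storage function $V(\x) := \x^T\X\x$, this reads precisely as the dissipation inequality $\dot V(\x) + \y^T\y - \gamma^2 \w^T\w < 0$ along the trajectories of the system.

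For the sufficiency direction I would first extract Hurwitz stability of $\A$ from the $(1,1)$ block: with $\w = \vo{0}$ the inequality gives $\X\A + \A^T\X < -\C^T\C \leq \vo{0}$, which combined with $\X > 0$ (and the usual detectability assumption on $(\C,\A)$) yields $\A$ Hurwitz. Then, taking $\x(0)=\vo{0}$ and any $\w \in \mathcal{L}_2$, integrating the dissipation inequality over $[0,\infty)$ together with $V \geq 0$ yields $\norm{\y}{2} \leq \gamma \norm{\w}{2}$; by Parseval's theorem this is precisely $\norm{\vo{\mathcal{G}}(s)}{\infty} \leq \gamma$.

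The harder direction is necessity: from $\norm{\vo{\mathcal{G}}(s)}{\infty} \leq \gamma$ one must construct a feasible $\X > 0$. My plan would be to first work with a strict bound $\norm{\vo{\mathcal{G}}(s)}{\infty} < \gamma'$ for some $\gamma' > \gamma$. Standard $\Hinf$ theory, via the Hamiltonian matrix built from the system data with $R := (\gamma')^2\I{} - \D^T\D$, then guarantees the absence of pure-imaginary eigenvalues, and hence a stabilizing solution $\X_0 \geq 0$ to the associated algebraic Riccati equation $\A^T\X_0 + \X_0\A + \C^T\C + (\X_0\B + \C^T\D)R^{-1}(\X_0\B + \C^T\D)^T = \vo{0}$. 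Converting this ARE back to matrix-inequality form via Schur complement and perturbing $\X_0$ to $\X_0 + \epsilon\I{}$ for small $\epsilon > 0$ produces a strictly feasible $\X$. The delicate step, which is the main obstacle, is reconciling the non-strict norm bound $\leq \gamma$ on the left with the strict LMI $< 0$ on the right: one must carefully control the limit $\gamma' \downarrow \gamma$ using continuity of the stabilizing ARE solution and choose the perturbation $\epsilon$ so that strict positivity of $\X$ and strict negativity of the LMI both persist.
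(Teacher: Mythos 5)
The paper offers no proof of this lemma at all: it is quoted directly from \cite{lmiCSys} as a standard result, so there is no internal argument to compare yours against. Your sketch is the classical storage-function/KYP route, and the sufficiency half is sound. One small remark there: the detectability assumption you invoke is superfluous, since the $(1,1)$ block of the strict LMI already gives $\X\A+\A^T\X < -\C^T\C \leq \vo{0}$, hence $\X\A+\A^T\X<0$, and $\A$ is Hurwitz by Lyapunov's theorem with no further hypotheses.

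The substantive point is the necessity direction, where you have correctly located the obstruction but stopped short of the conclusion: the limit $\gamma'\downarrow\gamma$ cannot be controlled, because the strict LMI is in fact equivalent to the \emph{strict} bound $\norm{\vo{\mathcal{G}}(s)}{\infty}<\gamma$ together with $\A$ Hurwitz. Indeed, feasibility of the strict LMI means the block matrix is bounded above by $-\epsilon\I{}$ for some $\epsilon>0$; integrating the resulting dissipation inequality from $\x(0)=\vo{0}$ gives $\norm{\y}{2}^2\leq(\gamma^2-\epsilon)\norm{\w}{2}^2$, i.e. $\norm{\vo{\mathcal{G}}(s)}{\infty}\leq\sqrt{\gamma^2-\epsilon}<\gamma$. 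So a system attaining $\norm{\vo{\mathcal{G}}(s)}{\infty}=\gamma$ admits no feasible $\X>0$, and the ``only if'' as literally stated (with $\leq\gamma$ on the norm) fails at the boundary; no choice of perturbation $\epsilon$ or continuity argument for the ARE solution can repair this. This is a common imprecision in how the bounded real lemma is quoted in the LMI literature, and it is harmless in this paper because Lemma \ref{lem:bounded_rob} and Theorems \ref{thm:hinf} and \ref{thm:hinf2} use only the sufficiency direction (LMI feasibility implies the performance bound). Your Riccati/Hamiltonian construction does establish necessity for the correctly stated strict version, which is the most that can be proved.
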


Proof of the following lemma can be easily established by starting with Lemma \ref{lem:bounded}, followed by manipulations using the results of Lemmas \ref{lem:schur}, \ref{lem:var_el} and \ref{lem:var_el2}.
\begin{lemma} \label{lem:bounded_rob}
Consider the LTI system,
\begin{align*}
    \xdot = (\A+\BDelta\A)\x + (\B+\BDelta\B)\w, \,\,
     \y = \C\x + \D\w,
\end{align*}
such that $\BDelta\A\in\set{A}$ and  $\BDelta\B\in\set{B}$ as defined in \eqn{def_unc_set}, and let
$\vo{\mathcal{G}}(s)$ be the associated transfer function matrix.
Then for a given scalar $\gamma>0$, $\norm{\vo{\mathcal{G}}(s)}{\infty}\leq\gamma$ for all admissible $\BDelta\A$ and $\BDelta\B$, if there exist a matrix $\X>0$, scalars $\delta_1>0$ and  $\delta_2>0$ such that
\begin{align*}
  \begin{bmatrix}
    \Z & \X\B +\C^T\D & \X\M_1        & \X\M_2 \\
    \ast    & \D^T\D - \gamma^2\I{} + \delta_2\N_2^T\N_2      & \vo{0}        & \vo{0} \\
    \ast    & \ast         & -\delta_1\I{} & \vo{0} \\
    \ast    & \ast         & \ast          & -\delta_2\I{}
  \end{bmatrix} < 0.
\end{align*}
where $\Z := \X\A + \A^T\X + \C^T\C + \delta_1\N_1^T\N_1$.
\end{lemma}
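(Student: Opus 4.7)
The plan is to apply Lemma \ref{lem:bounded} to the uncertain system, recast the two uncertainty blocks into the canonical $\bar{\X}\F\bar{\Y}+\bar{\Y}^T\F^T\bar{\X}^T$ form of Lemma \ref{lem:var_el}, eliminate them via the sufficient direction of that lemma, and finally use the Schur complement (Lemma \ref{lem:schur}) to arrange the result in the stated four-by-four block LMI.

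First, instantiating Lemma \ref{lem:bounded} with $\A\mapsto\A+\BDelta\A$ and $\B\mapsto\B+\BDelta\B$, the bound $\norm{\vo{\mathcal{G}}(s)}{\infty}\leq\gamma$ for a fixed admissible pair is equivalent to existence of $\X>0$ satisfying a $2\times 2$-block LMI. Expanding the perturbed blocks, this LMI splits as $\Omega(\X)+\Phi_1(\F_1)+\Phi_2(\F_2)<0$, where $\Omega(\X)$ is the nominal LMI from Lemma \ref{lem:bounded} with $(\A,\B)$, and
\begin{align*}
\Phi_1(\F_1) &= \begin{bmatrix}\X\M_1 \\ \vo{0}\end{bmatrix}\F_1\begin{bmatrix}\N_1 & \vo{0}\end{bmatrix}+\left(\cdot\right)^T, \\
\Phi_2(\F_2) &= \begin{bmatrix}\X\M_2 \\ \vo{0}\end{bmatrix}\F_2\begin{bmatrix}\vo{0} & \N_2\end{bmatrix}+\left(\cdot\right)^T.
\end{align*}
The zero padding in the outer vectors is precisely what places $\X\BDelta\A$ on the $(1,1)$ block and $\X\BDelta\B$ in the $(1,2)$ block of the original LMI.

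Second, applying the sufficient direction of Lemma \ref{lem:var_el} once per $\F_i$ (relabelling the scalar so that $\delta_i^{-1}$ multiplies the $\bar{\X}\bar{\X}^T$ term and $\delta_i$ multiplies $\bar{\Y}^T\bar{\Y}$) produces a deterministic LMI whose $(1,1)$ block is $\X\A+\A^T\X+\C^T\C+\delta_1\N_1^T\N_1+\delta_1^{-1}\X\M_1\M_1^T\X+\delta_2^{-1}\X\M_2\M_2^T\X$ and whose $(2,2)$ block is $\D^T\D-\gamma^2\I{}+\delta_2\N_2^T\N_2$; this LMI is sufficient for the uncertain LMI to hold for every $\F_i^T\F_i\leq\I{}$, and hence for every admissible $\BDelta\A\in\set{A}$, $\BDelta\B\in\set{B}$. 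A double application of Lemma \ref{lem:schur} then peels the two quadratic-in-$\X$ terms $\delta_i^{-1}\X\M_i\M_i^T\X$ out of the $(1,1)$ block into two new diagonal entries $-\delta_1\I{},-\delta_2\I{}$, with $\X\M_1,\X\M_2$ appearing as the $(1,3),(1,4)$ off-diagonal entries; this is exactly the claimed LMI with $\Z=\X\A+\A^T\X+\C^T\C+\delta_1\N_1^T\N_1$.

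The main obstacle is the bookkeeping for $\Phi_2$: since $\X\BDelta\B$ is off-diagonal, the lifts $\bar{\X}$ and $\bar{\Y}$ must be padded with zeros in opposite block positions, and one must check that the resulting $\delta_2\N_2^T\N_2$ lands in the $(2,2)$ block while $\delta_2^{-1}\X\M_2\M_2^T\X$ lands in the $(1,1)$ block. As the hint suggests, an equivalent route first Schur-reduces the bounded-real LMI to a Riccati-type inequality on the $(1,1)$ block and then applies Lemma \ref{lem:var_el2} to bound the $\BDelta\B$-dependent quadratic form $(\X(\B+\BDelta\B)+\C^T\D)(\gamma^2\I{}-\D^T\D)^{-1}(\cdot)^T$ directly, recovering the same four-block LMI.
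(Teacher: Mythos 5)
Your proposal is correct and follows essentially the route the paper itself indicates: the paper gives only a one-sentence sketch (start from Lemma \ref{lem:bounded}, then manipulate via Lemmas \ref{lem:schur}, \ref{lem:var_el}, and \ref{lem:var_el2}), and your detailed argument --- splitting the perturbed bounded-real LMI into a nominal part plus two canonical $\bar{\X}_i\F_i\bar{\Y}_i+(\cdot)^T$ terms with correctly zero-padded lifts, eliminating each $\F_i$ with the sufficient direction of Lemma \ref{lem:var_el}, and Schur-complementing the $\delta_i^{-1}\X\M_i\M_i^T\X$ terms into the extra diagonal blocks --- is a valid instantiation of that sketch, with the block bookkeeping (in particular $\delta_2\N_2^T\N_2$ landing in the $(2,2)$ block) handled correctly. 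Your closing remark on the alternative Riccati-form route via Lemma \ref{lem:var_el2} matches the paper's citation of that lemma, so nothing is missing.
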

\subsection{Main result}
Next, we present the result for solving the $\Hinf$-optimal robust observer design problem \eqn{hinf_prob}.
\begin{theorem} \label{thm:hinf}
The optimal observer gain $\Lg$ and sensor precision $\vo{\beta}$ for the sparse $\Hinf$-optimal robust observer design problem \eqn{hinf_prob} is determined by solving the following optimization problem, and if the problem is feasible then the gain is recovered as $\Lg = \X_2^{-1}\Y$.
\begin{equation}\left.
\begin{aligned}
& \min\limits_{\X_1>0,\X_2>0,\Y,\vo{\beta}>0, \delta_1>0, \delta_2>0}\quad \norm{\vo{\beta}}{1,\vo{\rho}}  \\
 & \text{ such that }
  \begin{bmatrix}
    \Z_{11} & \Z_{12}  &\Z_{13}    & \Z_{14} \\
    \ast        &  \Z_{22}   & \vo{0}        & \vo{0} \\
    \ast       & \ast         & -\delta_1\I{} & \vo{0} \\
    \ast       & \ast         & \ast          & -\delta_2\I{}
  \end{bmatrix} < 0 ,
\end{aligned} \right\}\eqnlabel{hinf_thm}
\end{equation}
\begin{align*}
  & \Z_{11} = \\
  & \begin{bmatrix}\sym{\X_1\A} + \delta_1\N_1^T\N_1 &\vo{0}\\ \vo{0} & \sym{\X_2\A+\Y\Cy} + \Cz^T\Cz \end{bmatrix}\\
  % \begin{bmatrix}\sym{\X_1\A} &\vo{0}\\ \vo{0} & \sym{\X_2\A+\Y\Cy} \end{bmatrix}  \\
  % & \qquad\qquad + \qquad \begin{bmatrix}\delta_1\N_1^T\N_1 &\vo{0}\\ \vo{0} &  \Cz^T\Cz \end{bmatrix}, \\
  &\Z_{12} = \begin{bmatrix}\X_1\Bd &\vo{0}\\  \X_2\Bd+\Y\Dd & \Y \end{bmatrix}, \\
  & \Z_{13} = \begin{bmatrix}\X_1\M_1\\ \X_2\M_1 \end{bmatrix} , \, \Z_{14} = \begin{bmatrix}\X_1\M_2\\ \X_2\M_2 \end{bmatrix}, \\
  & \Z_{22} = \begin{bmatrix} -\gamma^2\I{} + \delta_2\N_2^T\N_2 &\vo{0}\\ \vo{0} & -\gamma^2 \ \diag(\vo{\beta}) \end{bmatrix} .
\end{align*}
\end{theorem}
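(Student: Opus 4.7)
The plan is to apply Lemma \ref{lem:bounded_rob} to the augmented error system \eqn{obs_err} with the identifications $\A\leftrightarrow \vt{A}$, $\B\leftrightarrow \vt{B}\vt{S}$, $\C\leftrightarrow \vt{C}:=[\vo{0},\,\Cz]$, $\D\leftrightarrow \vo{0}$, and $(\M_i,\N_i)\leftrightarrow(\vt{M}_i,\vt{N}_i)$ from \eqn{def_tild_var}. This yields a sufficient condition for $\norm{\Gwz}{\infty}\leq\gamma$ robustly over all admissible $\BDelta\A,\BDelta\Bd$, parameterized by a single Lyapunov variable $\X>0$ and scalar multipliers $\delta_1,\delta_2>0$. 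The inequality at this stage is still bilinear in the decision variables: the observer gain $\Lg$ is multiplied by $\X$ in the $(2,2)$-entry of $\vt{A}$ and in the lower row of $\vt{B}$, and the unknown sensor scaling $\Sn$ sits inside $\vt{B}\vt{S}$.

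To convexify in $\Lg$, I would impose the block-diagonal ansatz $\X:=\diag(\X_1,\X_2)$ compatible with $\vt{x}=[\x^T,\e^T]^T$, a standard sufficient structural restriction that decouples the state and error Lyapunov functions. Under this ansatz every occurrence of $\Lg$ takes the form $\X_2\Lg$, so the substitution $\Y := \X_2\Lg$ linearizes the problem in the gain; since $\X_2>0$, recovery via $\Lg = \X_2^{-1}\Y$ is well defined. A block-by-block expansion then uses $\vt{C}^T\vt{C}=\diag(\vo{0},\Cz^T\Cz)$ and $\vt{N}_i^T\vt{N}_i=\diag(\N_i^T\N_i,\vo{0})$ to reproduce $\Z_{11},\Z_{13},\Z_{14}$ directly, but leaves $\Y\Sn$ (rather than $\Y$) in the bottom-right of the $(1,2)$-block and leaves $-\gamma^2\I{}$ in the sensor-noise portion of the $(2,2)$-block.

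To eliminate the remaining nonlinearity in $\Sn$, the key step is a congruence transformation by $\vt{T}:=\diag(\I{},\I{},\I{},\Sn^{-1},\I{},\I{})$, in which the $\Sn^{-1}$ factor acts on the sensor-noise coordinate of $\vt{w}$. Since congruence preserves definiteness, the transformation replaces $\Y\Sn$ by $\Y$ in the $(1,2)$-block and, via \eqn{def_beta}, replaces the offending $-\gamma^2\I{}$ in the $(2,2)$-block by $-\gamma^2\Sn^{-2}=-\gamma^2\diag(\vo{\beta})$, while leaving the zero off-diagonal blocks and the $-\delta_i\I{}$ slots invariant. The resulting matrix inequality is exactly the LMI \eqn{hinf_thm}, jointly linear in $(\X_1,\X_2,\Y,\vo{\beta},\delta_1,\delta_2)$, so minimizing $\vo{\rho}^T\vo{\beta}=\norm{\vo{\beta}}{1,\vo{\rho}}$ over the feasible cone is a convex program.

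The main obstacle is that three nonlinearities must be decoupled simultaneously: the robustness multipliers $\F_i$ from the uncertainty sets (absorbed through Lemma \ref{lem:bounded_rob}), the product $\X\Lg$ (handled via the block-diagonal $\X$ and the change of variables $\Y=\X_2\Lg$), and the sensor scaling $\Sn$ inside $\vt{B}\vt{S}$ (handled via $\vt{T}$). The delicate point is verifying that the single congruence $\vt{T}$ eliminates $\Sn$ from the $(1,2)$-block without reintroducing nonlinear dependence elsewhere; this succeeds only because $\D=\vo{0}$ in the error system, because $\vt{N}_2$ has zero columns in the sensor-noise slot, and because the relevant $(\ast,4)$-entries for the $\F_i$ multipliers are already zero.
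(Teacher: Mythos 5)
Your proposal is correct and follows essentially the same route as the paper: apply Lemma \ref{lem:bounded_rob} to the augmented error system, restrict to a block-diagonal Lyapunov matrix $\vt{X}=\diag(\X_1,\X_2)$, substitute $\Y=\X_2\Lg$, and eliminate $\Sn$ so that $\Z_{12}=\vt{X}\vt{B}$ and $\Z_{22}=\vt{S}^{-1}\W\vt{S}^{-1}=\diag(-\gamma^2\I{}+\delta_2\N_2^T\N_2,\,-\gamma^2\diag(\vo{\beta}))$. The only cosmetic difference is that you justify the $\Sn$-elimination by an explicit congruence with $\diag(\I{},\vt{S}^{-1},\I{},\I{})$ whereas the paper invokes successive Schur complements; the resulting manipulation and final LMI are identical.
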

\begin{proof}
As a direct application of Lemma \ref{lem:bounded_rob} for the system \eqn{obs_err}, the condition $\norm{\Gwz}{\infty} \leq \gamma$ in \eqn{hinf_prob} is satisfied if there exist a symmetric matrix $\vt{X}>0$, $\delta_1>0$ and $\delta_2>0$ such that
\begin{align}
  \begin{bmatrix}
    \Z_{11} & \vt{X}\vt{B}\vt{S}  & \vt{X}\vt{M}_1    & \vt{X}\vt{M}_2 \\
    \ast        &  \W  & \vo{0}        & \vo{0} \\
    \ast       & \ast         & -\delta_1\I{} & \vo{0} \\
    \ast       & \ast         & \ast          & -\delta_2\I{}
  \end{bmatrix} < 0, \eqnlabel{hinf_tmp1}
\end{align}
\begin{align}
\text{where, } \Z_{11} &:=  \sym{\vt{X}\vt{A}}+ \vt{C}^T\vt{C} + \delta_1\vt{N}_1^T\vt{N}_1, \eqnlabel{hinf_tmpZ11} \\
\W &:= - \gamma^2\I{} + \delta_2\vt{N}_2^T\vt{N}_2. \nonumber
            % & = \begin{bmatrix} -\gamma^2\I{} + \delta_2\N_2^T\N_2 &\vo{0}\\ \vo{0} & -\gamma^2\I{} \end{bmatrix}. \eqnlabel{hinf_tmpZ22}
\end{align}

Using the result of Lemma \ref{lem:schur} successively, the inequality \eqn{hinf_tmp1} becomes
\begin{align}
  \begin{bmatrix}
    \Z_{11} & \Z_{12}  &\Z_{13}    & \Z_{14} \\
    \ast        &  \Z_{22}   & \vo{0}        & \vo{0} \\
    \ast       & \ast         & -\delta_1\I{} & \vo{0} \\
    \ast       & \ast         & \ast          & -\delta_2\I{}
  \end{bmatrix} < 0, \eqnlabel{hinf_tmp3}
\end{align}
where, $\Z_{12}:=\vt{X}\vt{B}$, $\Z_{13}:=\vt{X}\vt{M}_1$, $\Z_{14}:=\vt{X}\vt{M}_2$, and
\begin{align*}
\Z_{22} &:= \vt{S}^{-1}\W\vt{S}^{-1} \\
&= \diag\left((-\gamma^2\I{} + \delta_2\N_2^T\N_2) \ , \ -\gamma^2(\Sn\Sn)^{-1} \right)  \\ % \begin{bmatrix} -\gamma^2\I{} + \delta_2\N_2^T\N_2 &\vo{0}\\ \vo{0} & -\gamma^2(\Sn\Sn)^{-1} \end{bmatrix} \\
&= \diag\left((-\gamma^2\I{} + \delta_2\N_2^T\N_2) \ , \ -\gamma^2 \ \diag(\vo{\beta}) \right), % \begin{bmatrix} -\gamma^2\I{} + \delta_2\N_2^T\N_2 &\vo{0}\\ \vo{0} & -\gamma^2 \ \diag(\vo{\beta}) \end{bmatrix},
\end{align*}
wherein we have used \eqn{def_beta} and the definitions of $\vt{N}_2$ and $\vt{S}$ from \eqn{def_tild_var}.
We partition $\vt{X}$ using $\X_1>0$, $\X_2>0$, such that
$\vt{X} := \diag\left(\X_1, \X_2 \right).$ % \begin{bmatrix} \X_1 & \vo{0} \\ \vo{0} & \X_2 \end{bmatrix}>0,
Let us define $\Y:=\X_2\Lg$. Then $\Z_{11}$ follows from \eqn{hinf_tmpZ11} and \eqn{def_tild_var} as follows
% \begin{align*}
% \Z_{11} &= \begin{bmatrix}\sym{\X_1\A} &\vo{0}\\ \vo{0} & \sym{\X_2\A+\Y\Cy} \end{bmatrix}  \\
% & \qquad\qquad + \qquad \begin{bmatrix}\delta_1\N_1^T\N_1 &\vo{0}\\ \vo{0} &  \Cz^T\Cz \end{bmatrix}.
% \end{align*}
\begin{align*}
& \Z_{11} = \\ & \begin{bmatrix}\sym{\X_1\A} + \delta_1\N_1^T\N_1 &\vo{0}\\ \vo{0} & \sym{\X_2\A+\Y\Cy} + \Cz^T\Cz \end{bmatrix}
\end{align*}
\begin{align*}
\text{Similarly, } \quad \Z_{12} &= \begin{bmatrix}\X_1\Bd &\vo{0}\\  \X_2\Bd+\Y\Dd & \Y \end{bmatrix}, \\
\Z_{13} &= \begin{bmatrix}\X_1\M_1\\ \X_2\M_1 \end{bmatrix} , \ \Z_{14} = \begin{bmatrix}\X_1\M_2\\ \X_2\M_2 \end{bmatrix}.
\end{align*}
Note that the inequality \eqn{hinf_tmp3} is linear in unknown variables $\X_1,\X_2,\Y,\vo{\beta}, \delta_1$ and $\delta_2$, and defines a feasibility condition for \eqn{hinf_prob}. The optimal solution is determined by minimizing the weighted $l_1$-norm $\norm{\vo{\beta}}{1,\vo{\rho}}$, and the observer gain can be recovered as $\Lg = \X_2^{-1}\Y$.
\end{proof}

The following theorem presents a solution to the problem \eqn{hinf_prob2}.
\begin{theorem} \label{thm:hinf2}
The optimal observer gain $\Lg$ and sensor precision $\vo{\beta}$ for the sparse $\Hinf$-optimal robust observer design problem \eqn{hinf_prob2} is determined by solving the following optimization problem, and if the problem is feasible then the gain is recovered as $\Lg = \X_2^{-1}\Y$.
\begin{equation}\left.
\begin{aligned}
& \min\limits_{\X_1>0,\X_2>0,\Y,\vo{\beta}>0} \ \norm{\vo{\beta}}{1,\vo{\rho}}  \text{ such that } \\
&   \begin{bmatrix}
      \Z_{11}  &  \Z_{12}   & \Z_{13}  \\
    \ast        &  \E_{\Delta}^T\E_{\Delta}-\I{}   & \E_{\Delta}^T\E_{d} \\
    \ast       & \ast         & \E_{d}^T\E_{d}-\I{}
  \end{bmatrix} < 0 ,\\
&   \begin{bmatrix}
    \W_{11} & \W_{12}  &\W_{13}    & \Y\\
    \ast        &  -\gamma^2\I{}   & \vo{0}        & \vo{0} \\
    \ast       & \ast         & -\gamma^2\I{} & \vo{0} \\
    \ast       & \ast         & \ast          & -\gamma^2 \ \diag(\vo{\beta})
  \end{bmatrix} < 0 ,
\end{aligned} \right\}\eqnlabel{hinf_thm2}
\end{equation}
\begin{align*}
& \Z_{11} = \sym{\X_1\A}+\C_{\Delta}^T\C_{\Delta}, \\
  & \Z_{12} =\X_1(\B_{\Delta}+\C_{\Delta}^T\E_{\Delta}),\
  \Z_{13} = \X_1(\Bd+\C_{\Delta}^T\E_d) ,\\
% \end{align*}
% \begin{align*}
  & \W_{11} = \sym{\X_2\A+\Y\Cy} +\Cz^T\Cz, \\
  &\W_{12} = \X_2\B_{\Delta}+\Y\B_{\Delta}, \
   \W_{13} =\X_2\Bd+\Y\Bd.
\end{align*}
\end{theorem}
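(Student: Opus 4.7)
The proof strategy parallels that of Theorem \ref{thm:hinf}, but the key simplification is that the LFT pull-out in \eqn{sys2_cut} converts the unstructured uncertainty $\voDelta$ into a fictitious input/output pair $(\w_\Delta, \z_\Delta)$, so no robustness lemma such as Lemma \ref{lem:bounded_rob} is required. The two $\Hinf$ conditions in \eqn{hinf_prob2} can be enforced independently, each by a direct application of Lemma \ref{lem:bounded}, and they share no decision variables across blocks (different Lyapunov matrices $\X_1$ and $\X_2$). I would therefore proceed in four steps.

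First, for the small-gain condition $\norm{\vo{\mathcal{G}}_{\vt{w}\z_\Delta}}{\infty}\leq 1$, I apply Lemma \ref{lem:bounded} with $\gamma=1$ directly to the plant part of \eqn{sys2_cut}: state matrix $\A$, input matrix $[\B_\Delta,\ \Bd,\ \vo{0}]$, output matrix $\C_\Delta$, feedthrough $[\E_\Delta,\ \E_d,\ \vo{0}]$. The sensor-noise channel contributes a decoupled, block-diagonal $-\I{}$ that is trivially negative definite, so I drop that row/column. What remains is precisely the first LMI of \eqn{hinf_thm2} in the single variable $\X_1>0$, since neither $\Lg$ nor $\Sn$ appears in this transfer function.

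Second, for the observer condition $\norm{\Gwz}{\infty}\leq\gamma$, I apply Lemma \ref{lem:bounded} to the error system \eqn{obs_err2}: state matrix $\A+\Lg\Cy$, input matrix $\vt{B}\vt{S}$ from \eqn{def_tild_var2}, output $\Cz$, and zero feedthrough. This gives a $4\times 4$ block LMI whose $(1,1)$ block is $\sym{\X_2(\A+\Lg\Cy)}+\Cz^T\Cz$, whose $(1,j)$ blocks for $j=2,3,4$ are the three column-blocks of $\X_2\vt{B}\vt{S}$, and whose remaining diagonal is $-\gamma^2\I{}$ repeated thrice. Introducing $\Y:=\X_2\Lg$ linearizes everything except the rightmost column, which contains the bilinear term $\X_2\Lg\Sn=\Y\Sn$.

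Third, to eliminate the bilinearity in $\Sn$ I perform a congruence transformation by $\diag(\I{},\I{},\I{},\Sn^{-1})$, which preserves negative definiteness. This sends the $(1,4)$ block $\Y\Sn\mapsto\Y$ and the $(4,4)$ block $-\gamma^2\I{}\mapsto -\gamma^2\Sn^{-2}=-\gamma^2\diag(\vo{\beta})$ by \eqn{def_beta}, recovering the second LMI of \eqn{hinf_thm2}. Finally, the two resulting inequalities are jointly linear in $(\X_1,\X_2,\Y,\vo{\beta})$, the objective $\norm{\vo{\beta}}{1,\vo{\rho}}=\vo{\rho}^T\vo{\beta}$ is linear under $\vo{\beta}>0$, and the gain is recovered as $\Lg=\X_2^{-1}\Y$.

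I do not expect a conceptual obstacle: both blocks are standard bounded-real applications and the $\Sn\to\diag(\vo{\beta})$ congruence is identical to the one used in Theorem \ref{thm:hinf}. The only bookkeeping to watch carefully is the three-way partition of $\vt{w}=[\w_\Delta^T,\dd^T,\n^T]^T$ and ensuring that the $\n$-column is treated correctly in both LMIs (absent from the $\z_\Delta$ transfer function, present and carrying the precision variable in the $\zerr$ transfer function). A minor but worth-noting point is that the two Lyapunov matrices $\X_1$ and $\X_2$ are kept independent, reflecting that the uncertainty condition does not depend on the observer; forcing $\X_1=\X_2$ would be an unnecessary restriction.
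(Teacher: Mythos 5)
Your proposal is correct and follows essentially the same route as the paper: both LMIs are obtained by two independent applications of Lemma \ref{lem:bounded} (dropping the decoupled $-\I{}$ block on the $\n$-channel in the first, and linearizing via $\Y=\X_2\Lg$ in the second), and your congruence transformation by $\diag(\I{},\I{},\I{},\Sn^{-1})$ is algebraically the same step the paper performs via successive Schur complements to turn $-\gamma^2\I{}$ into $-\gamma^2(\vt{S}\vt{S})^{-1}=-\gamma^2\,\diag(\vo{\beta})$. Your careful tracking of the column blocks of $\X_2\vt{B}$ in fact yields $\W_{12}=\X_2\B_{\Delta}+\Y\D_{\Delta}$ and $\W_{13}=\X_2\Bd+\Y\Dd$, which is what the definition of $\vt{B}$ in \eqn{def_tild_var2} requires.
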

\begin{proof}
Consider the transfer function from $\vt{w}$ to $\z_{\Delta}$ from \eqn{sys2_cut}. The first LMI in the theorem statement follows directly from the Lemma \ref{lem:bounded} for the condition $\norm{\vo{\mathcal{G}}_{\vt{w}\z_{\Delta}}}{\infty}\leq 1$.

Now consider the condition $\norm{\Gwz}{\infty}\leq\gamma$ for the error system \eqn{obs_err2}. Using Lemma \ref{lem:bounded}, it becomes, for  $\X_2>0$
\begin{align*}
  \begin{bmatrix}
    \sym{\X_2(\A+\Lg\Cy)} +\Cz^T\Cz & \X_2\vt{B}\vt{S} \\
    \ast & -\gamma^2\I{}
  \end{bmatrix} < 0.
\end{align*}
Using the result of Lemma \ref{lem:schur} successively, we get
\begin{align*}
  \begin{bmatrix}
    \sym{\X_2(\A+\Lg\Cy)} +\Cz^T\Cz & \X_2\vt{B} \\
    \ast & -\gamma^2(\vt{S}\vt{S})^{-1}
  \end{bmatrix} < 0.
\end{align*}
Finally, by defining $\Y:=\X_2\Lg$, using the definitions of $\vt{B}$, $\vt{S}$ and $\vo{\beta}$ from \eqn{def_tild_var2} and \eqn{def_beta}, we arrive at the second LMI in the theorem statement, which concludes the proof.
\end{proof}

\subsection{Iterative reweighted $l_1$-minimization} \label{sec:iterate}
We use an iterative reweighting scheme presented in \cite{boyd2008weightedL1} to achieve a sparse sensor configuration. We perform multiple iterations of solving the optimization problems \eqn{hinf_thm} or \eqn{hinf_thm2}, and the weights for $(k+1)^{\text{th}}$ iteration are defined in terms of the previous iterate as
\begin{align}
  \rho^{(k+1)}_i = \left(\epsilon+|\beta_i^{(k)}|\right)^{-1}, \eqnlabel{reweight}
\end{align}
where $\epsilon>0$ is a small number which ensures that the weights are well defined. Initial weights are chosen to be equal, i.e. without loss of generality, $\rho^{(0)}_i = 1$.
These iterations are stopped if the convergence criterion is met or the maximum number of iterations is reached \cite{boyd2008weightedL1}. % This algorithm has been used in numerous sparse sensor selection formulations, for instance, see \cite{Jovanovic2019TAC, Chepuri2015tac, deshpande2021cdcLCSS}.

The final refined solution is determined by removing sensors with very small precisions and re-solving  \eqn{hinf_thm} or \eqn{hinf_thm2} with equal weights $\rho_i=1$.

\section{Example} \label{sec:ex}
Let us consider a serially connected spring-mass-damper system on a frictionless surface as shown in \fig{SMD}.
\begin{figure}[htb]
    \centering
    \includegraphics[trim=0.35cm 0.35cm 0.25cm 0.35cm,clip,width=0.4\textwidth]{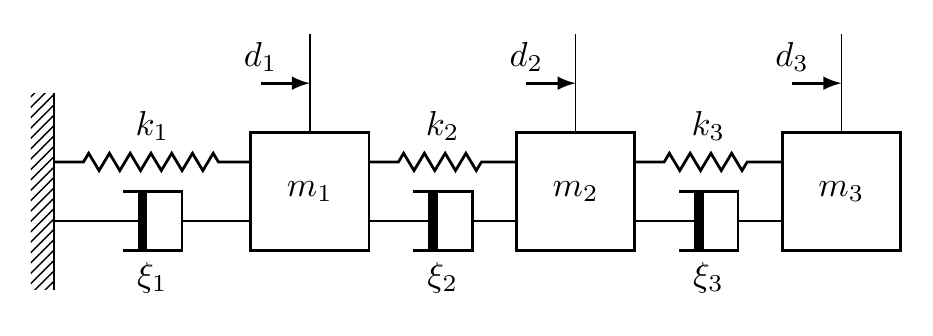}
    \caption{Serially connected spring-mass-damper system.}
    \label{fig:SMD}
\end{figure}

Let $x_i$ denote the distance of the $i^{\text{th}}$ mass from the wall. Define state vector to be $\x:=[x_1,x_2,x_3,\dot{x}_1,\dot{x}_2,\dot{x}_3]^T$. The nominal values of masses $m_i$, spring constants $k_i$, and damper coefficients $\xi_i$ are all assumed to be unity. Disturbances $d_i$ enters the system in the form of external forces acting independently on all masses. Also, we assume that sensors measure position and velocity of each mass. Therefore, there are six sensors. The first three sensors measure positions, and the last three sensors measure velocities of the masses.
The nominal system  matrices are given by
\begin{align*}
  \A &=  \begin{bmatrix} \vo{0} & \I{}\\ \H & \H \end{bmatrix} ,
  \H =  \begin{bmatrix} -2  &   1  &   0 \\    1  &  -2   &  1 \\    0    & 1   & -1\end{bmatrix},
  \Bd =  \begin{bmatrix} \vo{0} \\ \I{}\end{bmatrix}\Sd, \\
  \Cy &= \I{}, \Dd = \vo{0}, \Dn = \I{}, \Cz = \I{},
\end{align*}
where $\Sd$ is a known matrix which represents a scaling for the disturbance signal.
Next we consider the uncertainty of the form \eqn{sys} and \eqn{sys2_cut}, and determine the robust sparse observers using results of Theorems \ref{thm:hinf} and \ref{thm:hinf2}.

\subsubsection*{Uncertainty of the form \eqn{sys}}
The uncertainty in the system matrices is assumed to of the following form
\begin{align*}
  \BDelta\A = \begin{bmatrix} \vo{0} \\ \I{}  \end{bmatrix} \F_1 \begin{bmatrix} c_0 \H & \vo{0} \\ \vo{0} & c_1\H   \end{bmatrix},    \BDelta\Bd = \begin{bmatrix} \vo{0} \\ \I{}  \end{bmatrix} \F_2 (c_2\I{}).
\end{align*}
where  $\F_1^T\F_1\leq\I{}$, $\F_2^T\F_2\leq\I{}$ and $c_0,c_1,c_2$ are known non-negative constants which quantify the magnitude of uncertainty in the system, i.e., larger values of these parameters would imply uncertainty of larger magnitude, and on the other extreme end, zero-valued parameters correspond to the nominal plant with no uncertainty. For the system defined as above, we can directly apply the result of Theorem \ref{thm:hinf} with iterative reweighting \eqn{reweight} to determine sparse sensor configuration and corresponding optimal precision $\vo{\beta}$, which is discussed next.

The optimization problems are solved using the solver \texttt{SDPT3}\cite{SDPT3} with \texttt{CVX} \cite{cvx} as a parser.
 In the figures of this section, solid circle indicates that a sensor is required and the number above it shows the required precision $\vo{\beta}$, while cross indicates that a sensor is not used.

First, we consider the effect of specified performance parameter $\gamma$, for uncertainty of a fixed magnitude quantified by the parameters $c_0,c_1,c_2$. In particular, in \fig{varyWithGam}, we set $c_0,c_1,c_2$ to non-zero values, and vary the specified performance parameter $\gamma$. We observer that as we decrease $\gamma$, i.e. demand better performance, the number of required sensors and their associated precisions increase. For $\gamma =1$, only two sensors are needed, whereas for $\gamma = 0.25$, all six sensors are needed  with relatively higher precisions.
\begin{figure}[htb]
    \centering
    \includegraphics[trim=0.25cm 0.15cm 0.2cm 0.02cm,clip,width=0.48\textwidth]{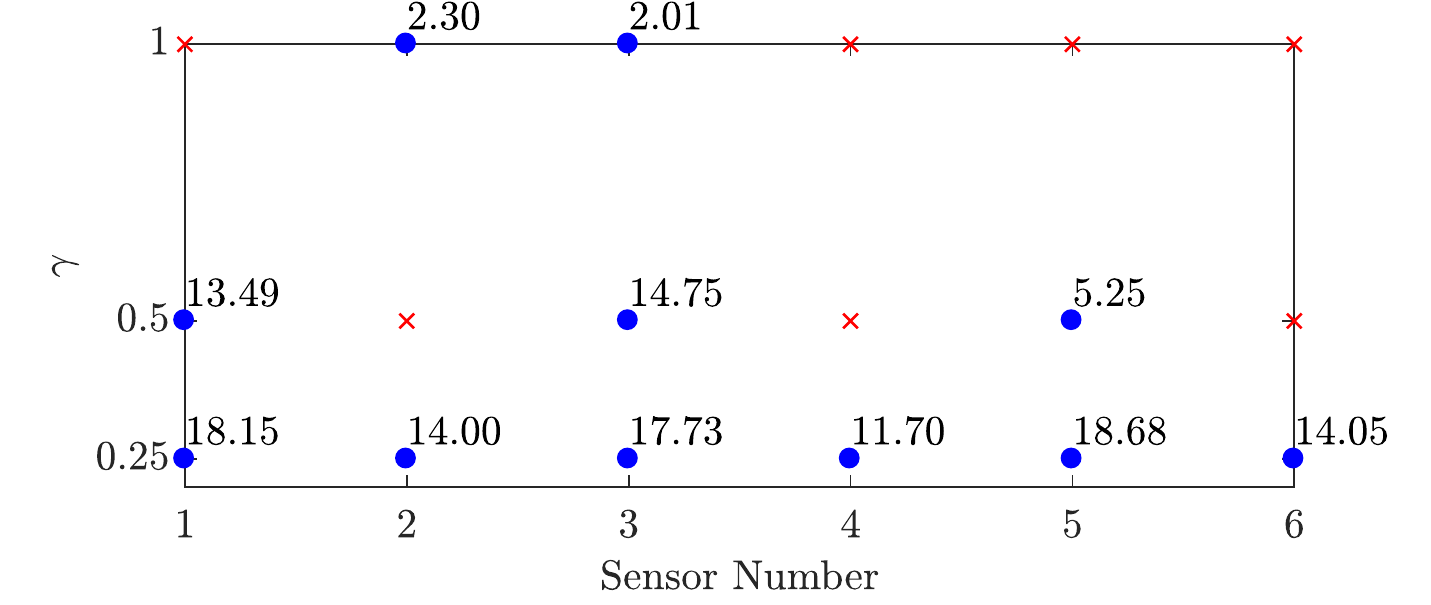}
    \caption{Sensor configuration and precision for different values of specified $\gamma$, and fixed $c_0 = 0.01$, $c_1 = 0.02$, $c_2 = 0.03$, and $\Sd = \I{}$.}
    \label{fig:varyWithGam}
\end{figure}
\begin{figure}[htb]
    \centering
    \includegraphics[trim=0.25cm 0.15cm 0.2cm 0.02cm,clip,width=0.48\textwidth]{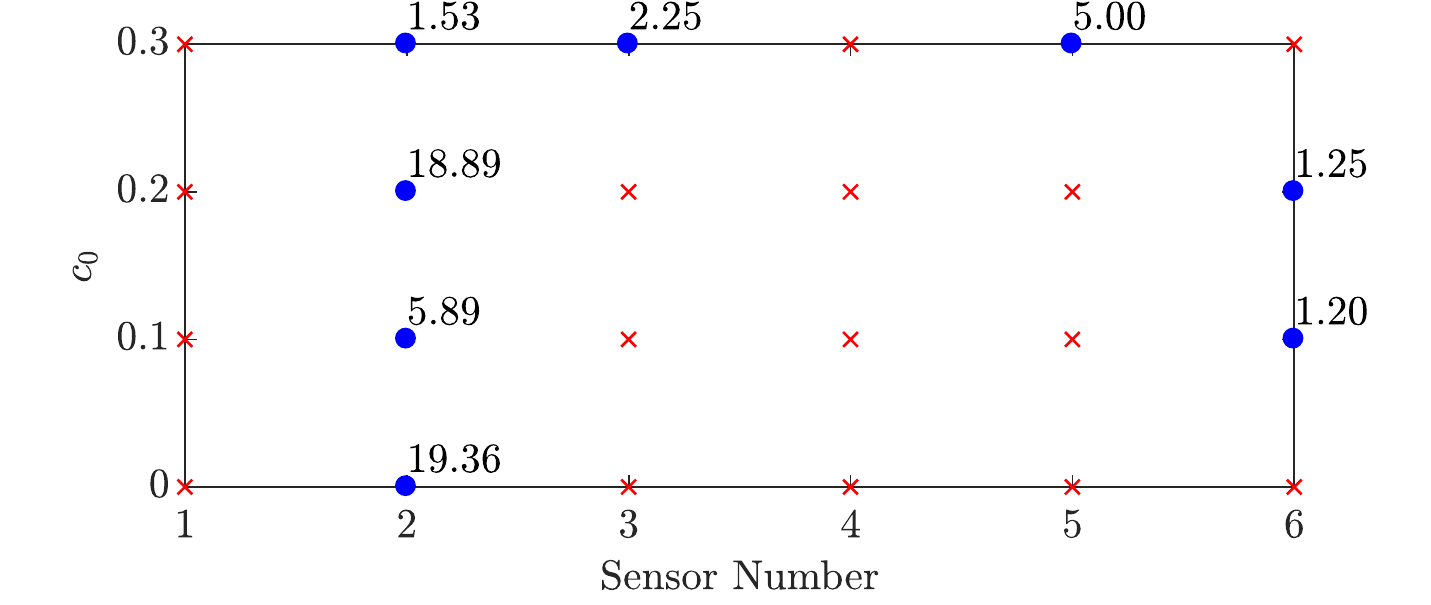}
    \caption{Sensor configuration and precision for different magnitudes of uncertainty quantified by $c_0$, and fixed $c_1 = 0$, $c_2 = 0$, $\gamma=1$, and $\Sd = \I{}$.}
    \label{fig:varyWithC0}
\end{figure}

In \fig{varyWithC0}, we show a complementary case to \fig{varyWithGam}, i.e. we vary the magnitude of uncertainty for a fixed performance parameter $\gamma$.
In particular, we vary $c_0$, for fixed parameters $c_1=0,c_2=0$, and $\gamma = 1$. As we increase the magnitude of uncertainty, i.e $c_0$, we see that the number of sensors required and the precision values increase. For the nominal plant corresponding to $c_0 = 0$, only one sensor is required, whereas for $c_0 = 0.3$, three sensors are required.
\subsubsection*{Uncertainty of the form \eqn{sys2_cut}}
We assume that the spring constants and damper coefficients can take values in the intervals $[c_0-1, c_0+1]$ and $[c_1-1, c_1+1]$ respectively, where $ c_0, c_1$ are known non-negative constants which quantify the magnitude of uncertainty. We assume no uncertainty in the masses. With such uncertainty, the system can be written in an LFT form as in \fig{lft} and \eqn{sys2_cut} such that $\norm{\voDelta}{\infty}\leq 1$.

Analogous to \fig{varyWithGam} and \fig{varyWithC0}, \fig{varyWithGam2} and \fig{varyWithC02} below  respectively show the effect of decreasing performance parameter $\gamma$ for a fixed magnitude of uncertainty, and the effect of increasing magnitude of uncertainty for a fixed  performance parameter $\gamma$. Similar observations as in the previous case can also be made for \fig{varyWithGam2} and \fig{varyWithC02}.
\begin{figure}[htb]
    \centering
    \includegraphics[trim=0.25cm 0.15cm 0.2cm 0.02cm,clip,width=0.48\textwidth]{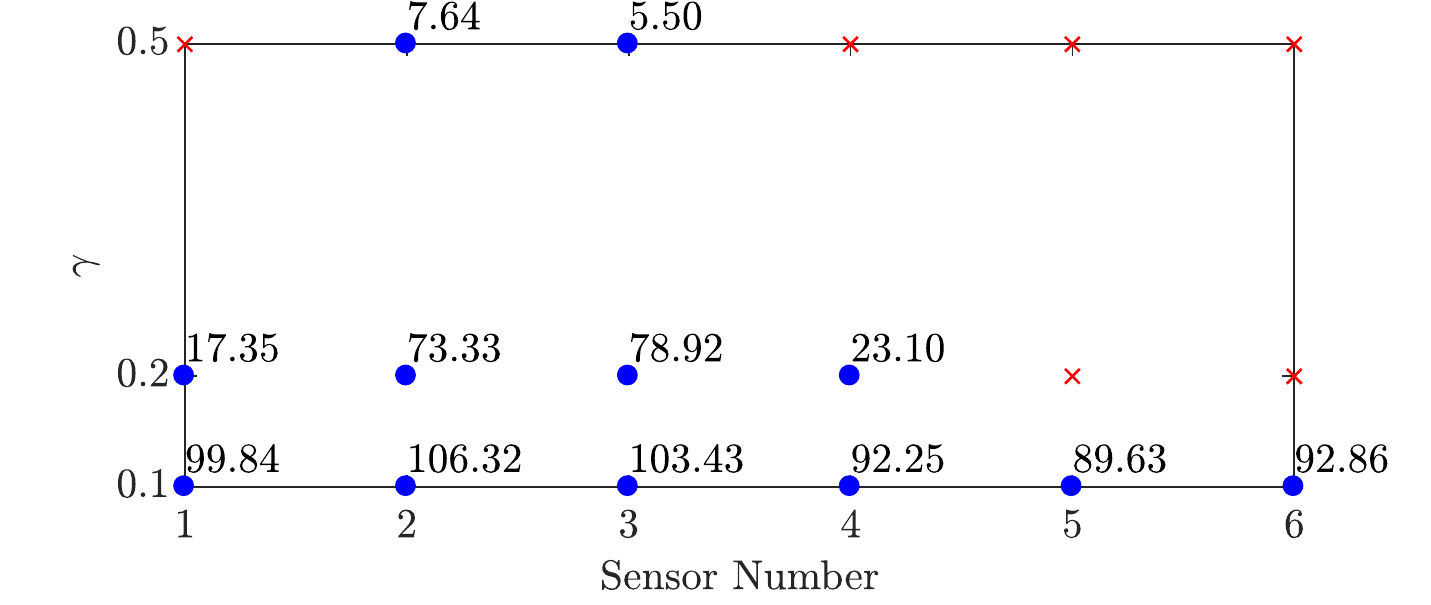}
    \caption{Sensor configuration and precision for different values of specified $\gamma$, and fixed $c_0 = 0.1$, $c_1 = 0.1$, $\Sd = 0.2\I{}$.}
    \label{fig:varyWithGam2}
\end{figure}

\begin{figure}[htb]
    \centering
    \includegraphics[trim=0.25cm 0.15cm 0.2cm 0.02cm,clip,width=0.48\textwidth]{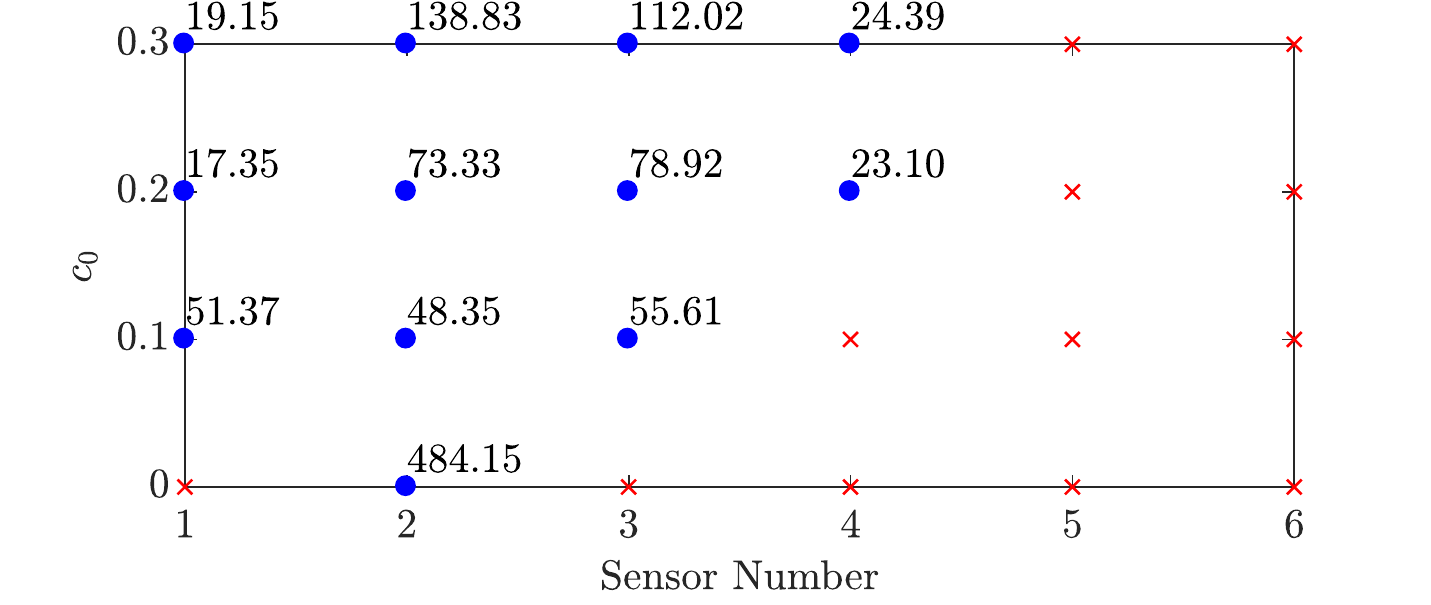}
    \caption{Sensor configuration and precision for different magnitudes of uncertainty quantified by $c_0$, and fixed $c_1 = 0$, $\gamma=0.2$, and $\Sd = 0.2\I{}$.}
    \label{fig:varyWithC02}
\end{figure}
\section{Conclusion} \label{sec:concl}
Herein, we present a unified theoretical framework to address the problem of sparse sensor configuration in the presence of model uncertainty while simultaneously minimizing the required sensor precisions. We consider two types of model uncertainties: structured affine uncertainty in system matrices and unstructured uncertainty. For both the cases, the robust sparse sensing problem is formulated in the context of $\Hinf$-optimal observer design and posed as a convex optimization problem subject to linear matrix inequalities. We minimize $l_1$-norm of the precision vector to promote sparsity, and an iterative reweighting scheme is used to refine the solution.

The convex optimization problems formulated in this work are semi-definite programs (SDPs). General-purpose solvers that we used in numerical simulations to solve the SDPs, in general, do not scale well for large-scale systems. These SDPs require customized solvers that can exploit the optimization problem's local structure, e.g., \cite{Jovanovic2019TAC}. However, for discussion brevity, this paper is limited to only the theoretical development of the framework. The development of customized algorithms for large-scale systems is a topic of our ongoing research.

% \section*{APPENDIX} \label{appendix}
% \noindent System matrices for the linearized F-16 model:

% \vfill\null
% \columnbreak

\bibliographystyle{unsrt}
\bibliography{root}

\begin{thebibliography}{10}

\bibitem{boyd2009tac}
S.~{Joshi} and S.~{Boyd}.
\newblock Sensor selection via convex optimization.
\newblock {\em IEEE Transactions on Signal Processing}, 57(2):451--462, Feb
  2009.

\bibitem{Sundaram2017automatica}
H.~{Zhang}, R.~{Ayoub}, and S.~{Sundaram}.
\newblock Sensor selection for kalman filtering of linear dynamical systems:
  Complexity, limitations and greedy algorithms.
\newblock {\em Automatica}, 78:202--210, 2017.

\bibitem{pappas2016acc}
V.~{Tzoumas}, A.~{Jadbabaie}, and G.~J. {Pappas}.
\newblock Sensor placement for optimal kalman filtering: Fundamental limits,
  submodularity, and algorithms.
\newblock In {\em 2016 American Control Conference (ACC)}, pages 191--196, July
  2016.

\bibitem{das2017icssa}
N.~Das and R.~Bhattacharya.
\newblock Sparse sensing architecture for kalman filtering with guaranteed
  error bound.
\newblock In {\em 1st IAA ICSSA}, 2017.

\bibitem{skelton2008jour}
F.~Li, M.~C. de~Oliveira, and R.~Skelton.
\newblock Integrating information architecture and control or estimation
  design.
\newblock {\em SICE JCMSI}, 1(2):120--128, 2008.

\bibitem{saraf2017acc}
R.~Saraf, R.~Bhattacharya, and R.~Skelton.
\newblock H2 optimal sensing architecture with model uncertainty.
\newblock In {\em 2017 American Control Conference}, pages 2429--2434, 2017.

\bibitem{deshpande2021cdcLCSS}
V.~M. {Deshpande} and R.~{Bhattacharya}.
\newblock Sparse sensing and optimal precision: An integrated framework for
  $\mathcal{H}_2/\mathcal{H}_{\infty}$ optimal observer design.
\newblock {\em IEEE Control Systems Letters}, 5(2):481--486, 2021.

\bibitem{Lopez2014acc}
J.~{Lopez}, Y.~{Wang}, and M.~{Sznaier}.
\newblock Sparse $\mathcal{H}_ 2$ optimal filter design via convex
  optimization.
\newblock In {\em 2014 ACC}, pages 1108--1113, June 2014.

\bibitem{Wolfrum2014tac}
U.~{Münz}, M.~{Pfister}, and P.~{Wolfrum}.
\newblock Sensor and actuator placement for linear systems based on $h_{2}$ and
  $h_{\infty}$ optimization.
\newblock {\em IEEE Transactions on Automatic Control}, 59(11):2984--2989,
  2014.

\bibitem{Stefanopoulou2020battery}
X.~{Lin}, H.~E. {Perez}, J.~B. {Siegel}, and A.~G. {Stefanopoulou}.
\newblock Robust estimation of battery system temperature distribution under
  sparse sensing and uncertainty.
\newblock {\em IEEE Transactions on Control Systems Technology},
  28(3):753--765, 2020.

\bibitem{Jovanovic2019TAC}
A.~{Zare}, H.~{Mohammadi}, N.~K. {Dhingra}, T.~T. {Georgiou}, and M.~R.
  {Jovanovic}.
\newblock Proximal algorithms for large-scale statistical modeling and
  sensor/actuator selection.
\newblock {\em IEEE T AUTOMAT CONTR}, 2019.

\bibitem{Chepuri2015tac}
S.~P. {Chepuri} and G.~{Leus}.
\newblock Sparsity-promoting sensor selection for non-linear measurement
  models.
\newblock {\em IEEE Transactions on Signal Processing}, 63(3):684--698, 2015.

\bibitem{das2020ifac}
N.~Das and R.~Bhattacharya.
\newblock Optimal sensing precision in ensemble and unscented kalman filtering.
\newblock In {\em 21st IFAC World Congress}, 2020.
\newblock To appear. Preprint: arXiv:2003.06003.

\bibitem{hiramoto2000vibration}
K.~{Hiramoto}, H.~{Doki}, and B.~{Obinata}.
\newblock Optimal sensor/actuator placement for active vibration control using
  explicit solution of algebraic riccati equation.
\newblock {\em J. Sound Vibr.}, 229(5):1057--1075, 2000.

\bibitem{zhouBookRobust}
K.~Zhou, J.~C. Doyle, and K.~Glover.
\newblock {\em Robust and Optimal Control}.
\newblock Prentice-Hall, Inc., Upper Saddle River, New Jersey, 1 edition, 1996.

\bibitem{lmiCSys}
Guang-Ren Duan and Hai-Hua Yu.
\newblock {\em LMIs in Control Systems}.
\newblock CRC Press, Boca Raton, FL, 1 edition, 2013.

\bibitem{wang1992SCL}
Youyi Wang, Lihua Xie, and Carlos~E. de~Souza.
\newblock Robust control of a class of uncertain nonlinear systems.
\newblock {\em Systems \& Control Letters}, 19(2):139--149, 1992.

\bibitem{boyd2008weightedL1}
E.~J. {Candes}, M.~B. {Wakin}, and S.~P. {Boyd}.
\newblock Enhancing sparsity by reweighted $l_1$ minimization.
\newblock {\em J. Fourier Anal. Appl.}, 14:877–905, 2008.

\bibitem{SDPT3}
K.~C. Toh, M.~J.Todd, and R.~H. Tütüncü.
\newblock Sdpt3 — a matlab software package for semidefinite programming,
  version 1.3.
\newblock {\em Optimization Methods and Software}, 11(1-4):545--581, 1999.

\bibitem{cvx}
Michael Grant and Stephen Boyd.
\newblock {CVX}: Matlab software for disciplined convex programming, version
  2.1, March 2014.

\end{thebibliography}
\end{document}